\newtheorem{definition}{Definition}
\newtheorem{lemma}[definition]{Lemma}
\def\squareforqed{\hbox{\rlap{$\sqcap$}$\sqcup$}}
\def\qed{\ifmmode\squareforqed\else{\unskip\nobreak\hfil
\penalty50\hskip1em\null\nobreak\hfil\squareforqed
\parfillskip=0pt\finalhyphendemerits=0\endgraf}\fi}
\def\endenv{\ifmmode\;\else{\unskip\nobreak\hfil
\penalty50\hskip1em\null\nobreak\hfil\;
\parfillskip=0pt\finalhyphendemerits=0\endgraf}\fi}
\newenvironment{proof}{\noindent \textbf{{Proof~} }}{\qed}
\newcommand{\be}{\begin{eqnarray}}
\newcommand{\ee}{\end{eqnarray}}
\newcommand{\dpsi}{\mathrm{d}\Psi}
\newcommand{\tr}{\mathrm{tr}}
\begin{document}

\title{Activation of entanglement in teleportation}
\author{Ryszard Weinar}
\affiliation{Institute of Theoretical Physics and Astrophysics, University of Gda\'nsk, 80-952 Gda\'nsk, Poland}
\author{Wies\l{}aw Laskowski}
\affiliation{Institute of Theoretical Physics and Astrophysics, University of Gda\'nsk, 80-952 Gda\'nsk, Poland}
\author{Marcin Paw\l{}owski}
\affiliation{Institute of Theoretical Physics and Astrophysics, University of Gda\'nsk, 80-952 Gda\'nsk, Poland}
\affiliation{Department of Mathematics, University of Bristol, Bristol BS8 1TW, U.K.}

\begin{abstract}
We study the activation of entanglement in teleportation protocols. To this end, we a present derivation of the average fidelity of teleportation process with noisy classical channel for qudits. In our work we do not make any assumptions about the entangled states shared by communicating parties.  Our result allows us to specify the minimum amount of classical information required to beat the classical limit when the protocol is based on the Bell measurements. We also compare average fidelity of teleportation obtained using noisy and perfect classical channel with restricted capacity. The most important insight into the intricacies of quantum information theory that we gain is that though entanglement, obviously, is a necessary resource for efficient teleportation it requires a certain threshold amount of classical communication to be more useful than classical communication. Another interesting finding is that the amount of classical communication required to activate entanglement for teleportation purposes depends on the dimension $d$ of the system being teleported but is not monotonic reaching maximum for $d=4$.
\end{abstract}

\pacs{03.67.Hk}

\maketitle
\section{introduction}
The teleportation protocol is a widely used and tested tool \cite{Bennett93,Dur00,Vaidman94,Horodeccyreview09}. It became main part of many quantum communication protocols and still is an interesting research field \cite{Jung09,DiFranco12,Taketami11}. It allows to transmit an unknown quantum state from a sender traditionally named "Alice" to a receiver "Bob" who are spatially separated. This protocol consists of a classical and a quantum channel. The presence of noise in these channels introduces imperfections in the process. A popular way to describe efficiency of teleportation is through the average fidelity \cite{Nielsen10}. When the protocol works perfectly, fidelity is equal to 1, which is the maximum value.  Otherwise it is less. A lot of works has been devoted to this subject \cite{Horodeccy99,Oh02}, but very little is said in them about the efficiency of teleportation, depending only on the classical channel's noise. While we have perfect classical communication we do not have to care about it and this is what people usually do. Things change when we are interested in the amount of information needed to have a non classical process. In one of the recent papers \cite{Banik12} authors  considered a teleportation protocol of qubits with imperfect classical channel where no restrictions on the quantum channel were made. They also derived a minimal capacity of a classical channel needed to have fidelity grater than the classical maximum of $\frac{2}{3}$.

The motivation for this work is to determine the minimum capacity of a classical channel in a more general case. We ask what is the minimum amount of information sent via classical channel if the average fidelity is to exceed the classical limit in case of qudits. This provides us with a threshold value of classical communication. Without any amount of entanglement is, for teleportation purposes, less useful than classical communication. Another interesting finding is that this threshold value for entanglement activation depends on the dimension $d$ of the system being teleported but is not monotonic reaching maximum for $d=4$.

%

In the first part of this article we give a description of the teleportation protocol and the notation we use. We also derive the value of the average fidelity of the teleportation process in the case where perfect channel is replaced with a noisy one. Later we give formula, based on the Shannon entropy, for a minimum capacity of the noisy classical channel and using it we move to our main result. The last part is devoted to the comparison of two channels: perfect with limited capacity and noisy.

\section{TELEPORTATION PROTOCOL}
Here we briefly present the teleportation protocol for qudits which we are operating with to establish the notation.

Alice has to teleport a qudit $|\Psi\rangle=\sum_{i=0}^{d-1}a_i|i\rangle$. Both Alice and Bob share the maximally entangled state  $|\Phi\rangle=\frac{1}{\sqrt d}\sum_{i=0}^{d-1}|i\rangle\otimes|i\rangle$. To start the teleportation Alice measures the states (two particles) possessed by her in Bell basis
\be
|\Psi_{mn}\rangle=\frac{1}{\sqrt d}\sum_{k=0}^{d-1}\omega^{mk}|k+n\rangle\otimes|k\rangle \label{bellbasis}
\ee
where $\omega=\mathrm{e}^\frac{2\pi \mathrm{i}}{d}$.

There are $d^2$ possible results where each one is encoded into a bit string of length $\log d^2$. We label them by $(m,n)$.
 Measurement made by Alice demolishes her state $|\Psi\rangle$ and the only thing she can do is to send a message with information gained by it to Bob. After having received from Alice the outcome of her measurement, Bob performs particular unitary transformation
\be
U_{mn}=\sum_{p=0}^{d-1}\omega^{mp}|n+p\rangle\langle p| \label{unitary}
\ee
on his particle and recreates the state $|\Psi\rangle$. This is a standard procedure which for $d=2$ gives teleportation of qubits \cite{Bennett93}. More general one is presented in \cite{Banaszek00}.

\section{AVERAGE FIDELITY WITH A NOISY CLASSICAL CHANNEL}
Consider a situation in which Alice has to teleport an arbitrary qudit state $|\Psi\rangle$ to Bob. The density matrix od this state is simply $\rho_A=|\Psi\rangle\langle\Psi|$ while Bob's state (not pure in general) after the teleportation we label by $\rho_B$. The efficiency of this process due to the fact that the state of Alice is pure, can be described as the average fidelity calculated over all states $|\Psi\rangle$.
\be
f_{\textrm{avg}}=\int\dpsi\tr(\rho_A\rho_B)=\int\dpsi \langle\Psi|\rho_B|\Psi\rangle \label{averagefidelity}
\ee
where $\dpsi$ is a Haar measure.

When there is no classical communication between the parties the average fidelity is simply $\frac{1}{d}$ whether they share some entangled state or not. It is straightforward to show it by using (\ref{averagefidelity}) and $\rho_B = \frac{1}{d}\mathbb{I}$.

On the other hand, one can assume that Alice and Bob have perfect classical communication and no shared entangled state. In this case Alice measures her state in some basis spanned by vectors $|i\rangle|$ and sends the result (classical information about the basis and the measurement outcome) to Bob who simply prepares the state the system of Alice collapsed to. So with probability $|\langle\Psi|i\rangle|^2$ Bob will produce the state $|i\rangle$ ($\rho_B=\sum_i|\langle\Psi|i\rangle|^2|i\rangle\langle i|$). Then again using (\ref{averagefidelity}) and properties of trace we have
\be \nonumber
f_{\textrm{avg}}&=&\mathrm{tr}\int\dpsi\rho_A\sum_i|\langle\Psi|i\rangle|^2|i\rangle\langle i|\\ \nonumber
&=&\mathrm{tr}\left(\int\dpsi|\Psi\rangle|\Psi\rangle\langle\Psi|\langle\Psi|\sum_i|i\rangle| i\rangle\langle i|\langle i|\right).
\ee
Here, integration under the trace goes over all symmetric elements of $d^2$ dimensional Hilbert space and it commutes with all the operators representing such states. Due to Schur's lemma it is proportional to the symmetric projector $\mathbb{P}_{SYM}$ where the proportionality factor is determined by the trace condition $\tr \int\dpsi|\Psi\rangle|\Psi\rangle\langle\Psi|\langle\Psi|=1$. In this case we have

\be
f_{\textrm{avg}}&=&\mathrm{tr}\left(\frac{\mathbb{P}_{SYM}}{d_{SYM}}\sum_i|i\rangle| i\rangle\langle i|\langle i|\right)=\frac{2}{d+1} \label{quantumlimit}
\ee
where $d_{SYM}=\frac{d(d+1)}{2}$ is a dimension of symmetric space.

This is a well known limit for the teleportation protocols \cite{Horodeccy99} which means that greater values can only be achieved by quantum processes based on entanglement between Alice and Bob.

Finally, we may ask what is the maximal fidelity of the process with the use of a noisy classical channel without any restrictions on shared entanglement? Let us assume, that we are allowed one use of a channel and Bob by using it can read one out of $d^2$ messages built with $\log d^2$ bits. If we assume that Alice sends to Bob a message $(m,n)$, what takes place in the teleportation protocol, and that Bob receives $(a,b)=(n+i,m+j)$ with probability $p_{ij}(mn)$ then the channel can be characterized by error probabilities as
\be
p_{ij}(mn)=p(m+i,n+j|m,n) \label{probabilities}
\ee
where the addition is modulo d and no error event occurs with probability $p_{00}(mn)$. This is the most general description of a channel with $d^2$ inputs and outputs.

After receiving the information Bob performs unitary transformation $U_{ab}$ on his particle. This transformation, by using (\ref{unitary}), can be split into two parts and written as
\be
U_{ab}=\omega^{-mj}U_{ij}U_{mn}. \label{uijumn}
\ee
Operation $U_{mn}$ reproduces the state $\rho_A$ and $U_{ij}$ introduces an error with probability $p_{ij}(mn)$. Finally, after performing his operation and by using
\be
\bar{p}_{ij}=\frac{1}{d^2}\sum_{m,n=0}^{d-1} p_{ij}(mn) \label{averageprobability}
\ee
Bob's state can be described as
\be
\rho_B=\sum_{i,j=0}^{d-1}\bar{p}_{ij}U_{ij}\rho_A U_{ij}^\dag.
\ee
To compute the average fidelity, the formula (\ref{averagefidelity}) has to be used again. Thus we obtain
\be
f_{\textrm{avg}}=\mathrm{tr}\sum_{i,j=0}^{d-1}\bar{p}_{ij}\int\dpsi\rho_A U_{ij}\rho_A U_{i,j}^\dag.
\ee
Because $U_{00}$ is identity we can simplify it as

\be
f_{\textrm{avg}}=\bar{p}_{00}+\mathrm{tr}\sum_{i+j>0}\bar{p}_{ij}\int\dpsi\rho_A U_{ij}\rho_A U_{ij}^\dag \label{fidelityraw}
\ee

When $\bar{p}_{00}=1$ then, of course, we have a perfect teleportation fidelity. To calculate the more general case we use the following lemma.

\begin{lemma} For $\rho_A=|\Psi\rangle\langle\Psi|$ and $U_{ij}$ described by (\ref{unitary}) value of ~$\tr\int\dpsi\rho_A U_{ij}\rho_A U_{ij}^\dag$ for all $i$ and $j$ (excluding $U_{00}$) is equal to $\frac{1}{d+1}$.
\end{lemma}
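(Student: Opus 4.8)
The plan is to collapse the integrand to a single squared overlap, then reuse the two-copy symmetric-subspace argument already employed above in deriving (\ref{quantumlimit}), and finally reduce the whole evaluation to the single trace $\tr U_{ij}$. Since $\rho_A=|\Psi\rangle\langle\Psi|$ is a rank-one projector, the integrand is just
\be
\tr\left(\rho_A U_{ij}\rho_A U_{ij}^\dag\right)=\left|\langle\Psi|U_{ij}|\Psi\rangle\right|^2 .
\ee
Writing this modulus squared as the amplitude times its conjugate and placing the two factors on separate copies of the Hilbert space turns it into an expectation value of $U_{ij}\otimes U_{ij}^\dag$ in the doubled state $|\Psi\rangle\otimes|\Psi\rangle$. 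Integrating, the Haar average factors out exactly as before, $\int\dpsi\,|\Psi\rangle\langle\Psi|\otimes|\Psi\rangle\langle\Psi|=\mathbb{P}_{SYM}/d_{SYM}$ with $d_{SYM}=d(d+1)/2$, so that
\be
\int\dpsi\left|\langle\Psi|U_{ij}|\Psi\rangle\right|^2=\frac{2}{d(d+1)}\,\tr\left[\mathbb{P}_{SYM}\left(U_{ij}\otimes U_{ij}^\dag\right)\right].
\ee

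Next I would substitute $\mathbb{P}_{SYM}=\frac{1}{2}(\mathbb{I}+\mathbb{S})$, where $\mathbb{S}$ is the swap, and apply the two elementary trace identities $\tr(U_{ij}\otimes U_{ij}^\dag)=|\tr U_{ij}|^2$ and $\tr[\mathbb{S}(U_{ij}\otimes U_{ij}^\dag)]=\tr(U_{ij}U_{ij}^\dag)=d$. This splits the answer into a fixed ``swap'' contribution equal to $d$ and an ``identity'' contribution controlled entirely by $|\tr U_{ij}|^2$, so that everything now hinges on computing a single trace.

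The remaining, and really the only computational, step is to evaluate $\tr U_{ij}$ from the explicit form (\ref{unitary}). One finds $\tr U_{ij}=\sum_{p}\omega^{ip}\langle p|j+p\rangle$; the overlap $\langle p|j+p\rangle$ forces $j\equiv 0$ for a nonzero contribution, after which the surviving geometric sum $\sum_p\omega^{ip}$ vanishes unless $i\equiv 0$ as well. Hence $\tr U_{ij}=0$ for every $(i,j)\neq(0,0)$, the identity contribution drops out, and only the swap term survives. This is exactly where excluding $U_{00}$ matters: $\tr U_{00}=d$ would otherwise restore the full overlap and give fidelity one. Collecting the surviving term yields $\frac{2}{d(d+1)}\cdot\frac{1}{2}\cdot d=\frac{1}{d+1}$, independently of the particular $(i,j)$, as claimed. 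I expect the only delicate point to be the modular arithmetic in the ket label $|j+p\rangle$, since it is precisely this that produces both selection rules; the rest is bookkeeping.
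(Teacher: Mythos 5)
Your proof is correct and follows essentially the same route as the paper: Haar-averaging the two-copy state to $\mathbb{P}_{SYM}/d_{SYM}$ via Schur's lemma, splitting $\mathbb{P}_{SYM}$ into identity and swap parts, and using the trace identities together with the tracelessness of $U_{ij}$ for $(i,j)\neq(0,0)$. The only difference is that you explicitly verify $\tr U_{ij}=0$ from (\ref{unitary}), which the paper merely asserts; that is a welcome addition rather than a departure.
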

\begin{proof}
Main tools of this proof are again Schur's lemma and properties of symmetric and antisymmetric projectors. Using them we can write
\be \nonumber
&&\tr\int\dpsi\rho_A U_{ij}\rho_A U_{ij}^\dag=\tr\left(U_{ij}\otimes U_{ij}^\dag\int\dpsi|\Psi\rangle|\Psi\rangle\langle\Psi|\langle\Psi|\right) \\ \nonumber
 &=&\tr\left(U_{ij}\otimes U_{ij}^\dag\frac{\mathbb{P}_{SYM}}{d_{SYM}}\right)=\tr\left(U_{ij}\otimes U_{ij}^\dag\frac{\mathbb{V}-\mathbb{I}}{2d_{SYM}}\right)=\\
 &=&\frac{1}{d(d+1)}\left(\tr(U_{ij} U_{ij}^\dag)+\tr(U_{ij})\tr( U_{ij}^\dag)\right)=\frac{1}{d+1}
\ee
where $\mathbb{V}=\mathbb{P}_{SYM}-\mathbb{P}_{ASYM}$ and for all $i$ and $j$ operators $U_{ij}$ (excluding unity) are traceless.
\end{proof}

 Now it is easy to derive the average fidelity depending on the probability $\bar{p}_{00}$
\be
f_{\textrm{avg}}&=&\frac{d\bar{p}_{00}+1}{d+1}. \label{fidelityresult}
\ee
So to have the fidelity of the teleportation process better or equal to $\frac{2}{d+1}$ the average probability $\bar{p}_{00}$ of sending correct message through noise channel should be greater or equal $\frac{1}{d}$.

Below we present another derivation of the same result which we include because it shows how imperfections of classical channel can be moved to the shared state.

One can write (\ref{uijumn}) differently as
\be
U_{ab}=\omega^{-in}U_{mn}U_{ij}.
\ee
 This situation is equivalent to teleportation protocol with a perfect classical channel in which Bob interferes with his part of quantum source by preforming on it unitary operation $U_{ij}$ with probability $p_{ij}$. This action will change pure maximally entangled state $\rho_{\Phi}=|\Phi\rangle\langle\Phi|$ into a mixed state $\rho_{\Phi}^B=\sum_{ij}p_{ij}\mathbb{I}\otimes U_{ij}\rho_{\Phi} \mathbb{I}\otimes U_{ij}^\dag$.
 Using the previously accepted definitions is easy to show that
 \be
 \rho_{\Phi}^B=p_{00}\rho_{\Phi}+\frac{1}{d}\sum_{i+j>0}p_{ij}\sum_{kl}|k\rangle\langle l|\otimes U_{ij}|k\rangle\langle l| U_{ij}^\dag.
 \ee
To calculate the average fidelity in this case one of the results from  \cite{Horodeccy99} can be used. In that paper authors gave formula $\bar{f}_{max}=(dF_{max}+1)/(d+1)$ where $F_{max}$ is the maximal overlap between the state (built by LOCC) used in the protocol and the state giving a perfect teleportation. Here
\be
F_{max}=\tr\rho_\Phi^B \rho_\Phi.
\ee
 Because operators $U_{ij}$ are traceless we have $F_{max}=p_{00}$ which gives (\ref{fidelityresult}) as well. Of course the result will be the same if instead of Bob Alice will disturb her part of the singlet state (in this case $\rho_{\Phi}^A=\sum_{ij}p_{ij}U_{ij}\otimes\mathbb{I}\rho_{\Phi} U_{ij}^\dag\otimes\mathbb{I} $).

\section{Minimal capacity of the classical channel}
The amount of classical communication is given by the mutual information between Alice's input and Bob's output \cite{Nielsen10}. Capacity then can be written as a function of the dimension of the state being teleported and probabilities (\ref{probabilities})  as
\be\nonumber
C
&=&\log d^2 -\sum_{mn}p_{mn}H(p_{ij}(mn))\\
&=&\log d^2 + \sum_{mn}p_{mn}\sum_{ij}p_{ij}(mn)\log p_{ij}(mn)
 \label{capacity1}
\ee
where $H(p_{ij}(mn))$ is the Shannon entropy and $p_{mn}$ is the probability that Alice will get the result $(m,n)$. We know of course that in a typical scheme all $p_{mn}=\frac{1}{d}$ but we can use more a general distribution if we are interested in minimum of $C$ taken over all errors probabilities. Here we can't use (\ref{averageprobability}) but we can split it into
\be \nonumber
C=\log d^2 &+&\sum_{mn}p_{mn}p_{00}(mn)\log p_{00}(mn)\\
&+& \sum_{mn}p_{mn}\sum_{i+j>0}p_{ij}(mn)\log p_{ij}(mn).
\ee
Because $H(p_{ij}(mn))$ is a convex function $C$ has minimum if all probabilities $p_{ij}$ are independent of information $(m,n)$ sent through classical channel and any error occurs with the same probability. This can be written as
\be
\forall_{mn}~~p_{00}(mn)&=&p_{00}\\
\forall_{mnij;i+j>0}~~p_{ij}(mn)&=&p
\ee
where $p_{00}$ and $p$ are constant values related by
\be
p=\frac{1-p_{00}}{d^2-1}.
\ee
After that it is straightforward to show that the minimal capacity does not depend on $p_{mn}$ and is given by
\be\nonumber
C_{min}(d,p_{00})&=&\log d^2 +p_{00}\log p_{00}\\&+&(1-p_{00})\log \frac{1-p_{00}}{d^2-1}.\label{cmin0}
\ee

Now we can answer the question what is the minimal capacity of classical channel needed to reach the limit (\ref{quantumlimit}). Because, we are using a channel in which $\bar{p}_{00}=p_{00}$ and (\ref{fidelityresult}) it is enough to substitute $p_{00}=\frac{1}{d}$ to (\ref{cmin0}) which gives
\be \nonumber
C_{min}(d,\frac{1}{d})=\log d-(1-\frac{1}{d})\log (d+1). \label{cmin}
\ee
For $d=2$ we reproduce the result ($C_{min}=0.208$) for 2-bit noisy classical channel derived in \cite{Banik12}. What is interesting here is the fact that the function is not monotonic and has a maximum ($C_{min}=0.259$) for $d=4$ (see FIG.\ref{max4}).

\begin{figure}[ht]
\includegraphics[width=0.37\textwidth]{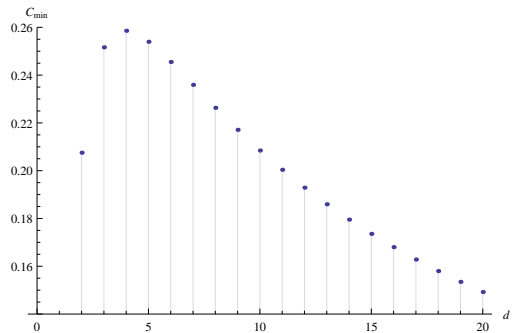}
\caption{Threshold channel capacity for entanglement activation as a function $C_{min}(d,\frac{1}{d})$ of the dimension of the state being teleported.}
\label{max4}
\end{figure}

\section{Comparison of channels}
So far we studied the case of a general noisy channel with $d^2$ inputs and outputs. One might ask if, for a given capacity, it is more beneficial to use this channel or the one with less inputs and outputs but no noise. Now we answer this question. We will call these two channels perfect and noisy respectively.

For the perfect channel we have to find the optimal strategy
for Alice and Bob. It is not a difficult
task. Without loss of generality we can assume that
Alice will send a particular part of her result. For example,
if they are allowed to use 2-bit channel in teleportation of
4-dimensional states Bob after being received information
from Alice has $\frac{1}{4}$ probability to perform correct unitary transformation.

To be more precise assume that Alice's
result was 0110 and she agreed with Bob to always send
two first bits of her measurement outcome. After that
Bob will reject 12 out of 16 capabilities and pick with equal probability one out of four (0100, 0101, 0110, 0111) to perform unitary transformation (\ref{unitary}).

For the generalization it is sufficient to put $p_{00} = \frac{2^C}{
d^2}$ use again (\ref{fidelityraw}) and lemma1. Fidelity for perfect channel can
be than written as
\be
f_{\textrm{avg}}^p(C,d)=\frac{\frac{2^C}{d}+1}{d+1}.
\ee
It's easy to see that for $C=\log d^2$ this fidelity is 1 and that beyond $C=\log d$ it is greater than (\ref{quantumlimit}).

\begin{figure}[ht]
\includegraphics[width=0.37\textwidth]{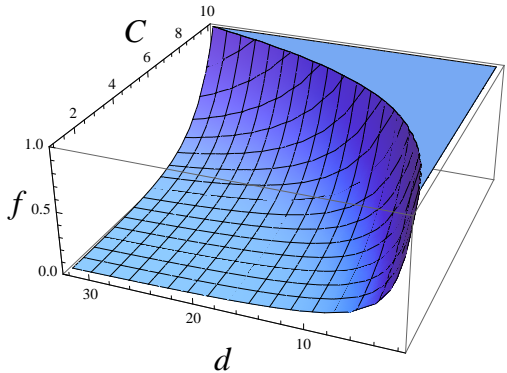} 
\caption{The average fidelity for the perfect channel as a function of the dimension of the teleported state and the capacity of that channel. Here $f=f_{\textrm{avg}}^p(C,d)$, $d$ is the dimension and $C$ stands for the capacity of the channel. }
\label{fprys}
\end{figure}

In the case of the noisy channel finding formula for average fidelity $f_{\textrm{avg}}^n$ is not so simple. There is no algebraic way to do this but numerically it can by easily achieved by using (\ref{cmin0}) and (\ref{fidelityresult}) where average probability $\bar{p}_{00}$ goes to $p_{00}$. 

 \begin{figure}[ht]
\centering
\includegraphics[width=0.37\textwidth]{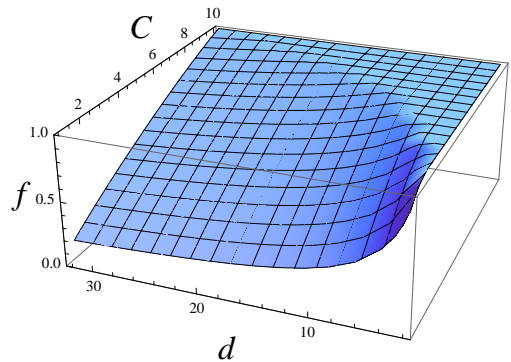}
\caption{The average fidelity for the perfect channel as a function of the dimension of the teleported state and the capacity of that channel.  Here $f=f_{\textrm{avg}}^n(C_{min},d)$, $d$ is the dimension and $C$ stands for the capacity of the channel.}
\label{fnrys}
\end{figure}

Comparing the fidelities shown in FIG.\ref{fprys} and FIG.\ref{fnrys} clearly shows that the teleportation protocol with the same classical amount of information is better if the noisy classical channel is used. The difference between the observed fidelities is shown in FIG.\ref{fn-fprys}.
 
 \begin{figure}[ht]
\includegraphics[width=0.37\textwidth]{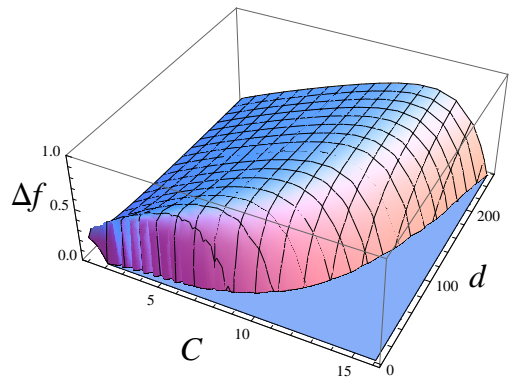} 
\caption{ The difference $f_{\textrm{avg}}^n(C_{min},d)-f_{\textrm{avg}}^p(C,d)$ between the fidelities obtained for the same capacity using different channels.}
\label{fn-fprys}
\end{figure}

\section{conclusions}

We studied the threshold amount of the classical communication required for the teleportation protocol to exceed maximal classical fidelity. We have shown its amount depends on the dimension of the teleported state but is, interestingly, not monotonic and reaches maximum for $d=4$. We have also compared different channels of the same capacity and found that, for teleportation purposes, the one with white noise is optimal. We have restricted ourselves to the standard teleportation protocols involving maximally entangled states and measurements in Bell basis. We conjecture that our results hold also in the general case but proving this is an open avenue of research. It would also be interesting to see how the threshold value of communication changes if some restrictions on shared states are put.

\acknowledgments

We thank Micha\l{} Horodecki for helpful discussions. This work is a part of the Foundation for Polish Science TEAM project cofinanced by the EU European Regional Development Fund. M.P. is supported by ERC grant QOLAPS and U.K. EPSRC. W.L. is supported by the National Centre for Research and Development (Chist-Era Project QUASAR).

\end{document}